\newif\ifSOSA
\newtheorem{theorem}{Theorem}[section]
\newtheorem{corollary}[theorem]{Corollary}
\newtheorem{lemma}[theorem]{Lemma}
\newtheorem{proposition}[theorem]{Proposition}
\newtheorem{claim}[theorem]{Claim}
\theoremstyle{definition}
\newtheorem{definition}[theorem]{Definition}
\newcommand{\N}{\mathbb{N}}
\newcommand{\A}{\mathcal{A}}
\newcommand{\R}{\mathbb{R}}
\newcommand{\ep}{\eps}
\newcommand{\ra}{\rightarrow}
\newcommand{\la}{\leftarrow}
\newcommand{\Va}{V_{\mathit{acc}}}
\newcommand{\va}{v_{\mathit{acc}}}
\newcommand{\vs}{v_{0}}
\newcommand{\fS}{\mathcal{S}}
\newcommand{\poly}{\operatorname{poly}}
\renewcommand{\L}{\mathbf{L}}
\DeclareMathOperator{\samp}{SAMP}
\newcommand{\eps}{\varepsilon}
\newcommand{\tO}{\tilde{O}}
\newcommand{\zo}{\{0,1\}}
\newcommand{\Vn}{V_{\text{neg}}}
\newcommand{\F}{\mathcal{F}}
\newif\ifdraft
\title{\Large Deterministic Approximation of Random Walks via Queries in Graphs of Unbounded Size}
\author{Edward Pyne\thanks{Supported by NSF grant CCF-1763299.} \\ Harvard University \\ \texttt{epyne@college.harvard.edu} \\ \and Salil Vadhan\thanks{Supported by NSF grant CCF-1763299 and a Simons Investigator Award.} \\ Harvard University \\ \texttt{salil\char`_vadhan@harvard.edu}}
\begin{document}
\begin{titlepage}
\ifSOSA
\fancyfoot[R]{\scriptsize{Copyright \textcopyright\ 2022 by SIAM\\
Unauthorized reproduction of this article is prohibited}}
\fi
\maketitle
\begin{abstract}
    Consider the following computational problem: given a regular digraph $G=(V,E)$, two vertices $u,v \in V$, and a walk length $t\in \mathbb{N}$, estimate the probability that a random walk of length $t$ from $u$ ends at $v$ to within $\pm \varepsilon.$  A randomized algorithm can solve this problem by carrying out $O(1/\varepsilon^2)$ random walks of length $t$ from $u$ and outputting the fraction that end at $v$.

    In this paper, we study $\textit{deterministic}$ algorithms for this problem that are also restricted to carrying out walks of length $t$ from $u$ and seeing which ones end at $v$. Specifically, if $G$ is $d$-regular, the algorithm is given oracle access to a function $f : [d]^t\to \{0,1\}$ where $f(x)$ is $1$ if the walk from $u$ specified by the edge labels in $x$ ends at $v$.  We assume that G is $\textit{consistently labelled}$, meaning that the edges of label $i$ for each $i\in [d]$ form a permutation on $V$. 

    We show that there exists a deterministic algorithm that makes $\text{poly}(dt/\varepsilon)$ nonadaptive queries to $f$, regardless of the number of vertices in the graph $G$.  Crucially, and in contrast to the randomized algorithm, our algorithm does not simply output the average value of its queries.  Indeed, Hoza, Pyne, and Vadhan (ITCS 2021) showed that any deterministic algorithm of the latter form that works for graphs of unbounded size must have query complexity at least $\exp(\tilde{\Omega}(\log(t)\log(1/\varepsilon)))$.  

    In the language of pseudorandomness, our result is a separation between the query complexity of ``deterministic samplers'' and ``deterministic averaging samplers'' for the class of ``permutation branching programs of unbounded width''.  Our separation is stronger than the prior separation of Pyne and Vadhan (CCC 2021), and has a much simpler proof (not using spectral graph theory or the Impagliazzo--Nisan--Wigderson pseudorandom generator).  On the other hand, the algorithm of Pyne and Vadhan is explicit and computable in small space, whereas ours is not explicit (unless we assume the existence of an optimal explicit pseudorandom generator for permutation branching programs of bounded width).
\end{abstract}

\vfill
\textbf{Keywords:} pseudorandomness, space-bounded computation
\thispagestyle{empty}
\end{titlepage}
\newpage
\section{Introduction}
    Consider the following computational problem: given a regular digraph $G=(V,E)$, two vertices $u,v \in V$, and a walk length $t\in \N$, estimate the probability that a random walk of length $t$ from $u$ ends at $v$ to within $\pm \ep.$  A randomized algorithm can solve this problem by carrying out $O(1/\ep^2)$ random walks of length $t$ from $u$ and outputting the fraction that end at $v$. 

    In this paper, we study {\em deterministic} algorithms for this problem that are also restricted to carrying out walks of length $t$ from $u$ and seeing which ones end at $v$.  Specifically, if $G$ is $d$-regular, the algorithm is given oracle access to a function $f : [d]^t\ra\zo$ where $f(x)$ is 1 if the walk from $u$ specified by the edge labels in $x$ ends at $v$.  We assume that G is {\em consistently labelled}, meaning that the edges of label $i$ for each $i\in [d]$ form a permutation on $V$.  (It can be shown that every 
    $d$-regular digraph has a consistent labelling, so this requirement does not constrain the graph structure, only the labelling.) In the case where we have ``white-box'' access to the graph, Ahmadinejad, Kelner, Murtagh, Peebles, Sidford, and Vadhan~\cite{AhmadinejadKeMuPeSiVa19} gave a deterministic algorithm for this problem with space complexity $\tO(\log(|V|\cdot td/\ep))$.

    We show that there exists a deterministic algorithm that makes $\poly(dt/\ep)$ nonadaptive queries to $f$, regardless of the number of vertices in the graph $G$.  Crucially, and in contrast to the randomized algorithm, our algorithm does not simply output the average value of its queries.  Indeed, Hoza, Pyne, and Vadhan~\cite{HozaPyVa21} showed that any deterministic algorithm of the latter form that works for graphs of unbounded size must have query complexity at least $\exp(\tilde{\Omega}(\log (t)\log(1/\ep)))$.   

    Below we present these results in the language of pseudorandomness, as a separation between the query complexity of ``deterministic samplers'' and ``deterministic averaging samplers'' for the class of ``permutation branching programs of unbounded width''.  Our separation is stronger than the prior separation of Pyne and Vadhan~\cite{PyneVa21a}, and has a much simpler proof (not using spectral graph theory or the Impagliazzo--Nisan--Wigderson pseudorandom generator).  On the other hand, the algorithm of Pyne and Vadhan is explicit and computable in small space, whereas ours is not explicit (unless we assume the existence of an optimal explicit pseudorandom generator for permutation branching programs of bounded width).
    
\subsection{Ordered Branching Programs}
    Motivated by the goal of derandomizing space-bounded computation, i.e. proving $\textbf{BPL}=\L$, there has been extensive work on estimating the acceptance probabilities of \textit{ordered branching programs}, which capture how a randomized small-space algorithm uses its random bits.
    \begin{definition}
        An \textbf{ordered branching program (OBP)} $B$ of \textbf{length} $n$ and \textbf{width} $w$ computes a function $B:\zo^n\ra\zo$. On an input $\sigma\in \zo^n$, the branching program computes as follows. It has $n+1$ layers $V_0,\ldots,V_n$, each with vertices labeled $\{1,\ldots,w\}$. It starts at a fixed start state $\vs\in V_0$. 
        Then for $r=1,\ldots,n$, it reads the next symbol $\sigma_r$ and updates its state according to a transition function $B_r:V_{r-1}\times \zo\ra V_r$ by taking $v_{r}=B_r(v_{r-1},\sigma_r)$. 
        For $v\in V_i$ and $u\in V_j$ for $j>i$, we write $B[v,x]=u$ if the program transitions to state $u$ starting from state $v$ on input $x\in \zo^{j-i}$.

	    Moreover, there is an accept state $\va\in V_n$.
	    For $x\in \zo^n$, we define $B(x)=1$ iff $B[\vs,x]=\va$.  That is, $B$ accepts the inputs $x$ that lead it from the start state $\vs$ in the first layer to the accept state in the last layer $\va$.
    \end{definition}
    
    Given a graph $G=(V,E)$ with $w$ vertices, outdegree 2, $n\in \N$, and two vertices $u,v\in V$, we can define an associated ordered branching program $B^{G,u,v,n}$ which simulates walks of length $n$ in $G$.  Specifically, we set $\vs=u$, $\va=v$, and $B^{G,u,v,n}_r(v_{r-1},\sigma)$ to be the $\sigma$'th neighbor of vertex $v_{r-1}$ for every $r=1,\ldots,n$, $v_{r-1}\in V_{r-1}=\{1,\ldots,w\}$, and $\sigma\in \{0,1\}$.  Then $B^{G,u,v,n}(x)=1$ iff carrying out a walk of length $n$ according to the edge labels in $x$ leads from $u$ to $v$ in $G$.  In particular, the probability that $B^{G,u,v,n}$ accepts a uniformly random input $x$ is exactly the probability that a random walk of length $n$ from $u$ ends at $v$, exactly the computational problem we wish to solve.
    Compared to a general ordered branching program, $B^{G,u,v,n}$ has the same transition function at every layer.
    
    The definition of branching programs naturally generalizes to alphabet sizes $d>2$, allowing for simulation of random walks on graphs of degree $d$.  We present our results just for the $d=2$ case for simplicity, but they extend to larger $d$ as well.

    Motivated by the derandomization of space-bounded computation, there has been three decades of work on deterministically estimating the acceptance probability of ordered branching programs in small space (for instance~\cite{BabaiNiSz92,Nisan92,ImpagliazzoNiWi94,SaksZh99,BravermanCoGa18,MekaReTa18,Hoza21} and many others). In the case where we have white-box access to the branching program, the algorithm of Saks and Zhou~\cite{SaksZh99}, as recently improved by Hoza~\cite{Hoza21}, achieves space complexity $o(\log^{3/2}n)$ in the case $w=\poly(n)$ and $\ep=1/\poly(n)$.
    
    Our focus, however, is on ``black-box'' derandomization, where we only have oracle access to the function $B : \zo^n\rightarrow \zo$. 
    In this setting, we consider two questions: 
    \begin{enumerate}
        \item Is there a small set $Q\subseteq \zo^n$ such that knowing the value of an arbitrary branching program $B$ on all points of $Q$ allows us to estimate $\Pr[B(U_n)=1]$ up to additive error $\ep$? 
        We call the size of the smallest such $Q$ the \textbf{query complexity} of two-sided derandomization.
        \item If so, can we explicitly construct this set, and compute the estimate of the probability, in space $O(\log |Q|)$?
    \end{enumerate}
    
    Such algorithms can be seen as deterministic analogues of {\em samplers} for a restricted class of functions.  (See the 
    survey \cite{Goldreich11} for a general treatment of samplers.)  These are defined as follows, following the language of Cheng and Hoza~\cite{ChengHo20}:
    \begin{definition}
        Let $\F$ be a class of functions $f : \zo^n \ra \zo$. A \textbf{deterministic $\ep$-sampler} for $\mathcal{F}$ is an algorithm $\samp$ that, given oracle access to an arbitrary $f\in \F$, makes queries to $f$ and outputs
        $\samp^f()\in \R$ such that
        \[|\samp^f()-\Pr[f(U_n)=1]|\leq \ep.\]
        We say the the \textbf{query complexity} $S$ is the maximum over $f\in \F$ of the number of distinct queries made. We say the sampler is \textbf{explicit} if, given $n$, $\eps$, and parameters defining the family $\F$ and $\eps>0$, $\samp$ can be computed by a uniform algorithm with space complexity $O(\log S)$.
        
        We say that $\samp$ is \textbf{nonadaptive} if it makes nonadaptive queries to its oracle $f$.  A special case of nonadaptive samplers are \textbf{deterministic averaging samplers} whereby the output of the sampler is the average of $f$ over its set $Q$ of oracle queries.
    \end{definition}
    It can be shown that deterministic averaging $\ep$-samplers for a class $\F$ are equivalent to $\ep$-pseudorandom generators (PRGs) for the class $\F$, where the seed length of the PRG is equal to the logarithm of the query complexity of the sampler.
    
    We also consider deterministic hitters (i.e. hitting set generators), a weaker one-sided analogue of deterministic samplers.
    \begin{definition}\label{def:hsg}
	    Let $\F$ be a class of functions $f : \zo^n \ra \zo$. A \textbf{deterministic $\ep$-hitter} for $\F$ is a set $H\subseteq \zo^n$ such that for every $f \in \F$ where $\Pr_{x \la U_{n}}[f(x)=1]> \ep$, there is $x \in H$ such that $f(x)=1$. We say that $H$ is \textbf{explicit} if the elements of $H$ can be enumerated in space $O(\log |H|)$.
    \end{definition}
    It can be shown that a deterministic $\ep$-sampler $\samp$ for a class $\F$ implies a deterministic $2\ep$-hitter $H$ for $\F$, where $H$ is explicit if $\samp$ is. (Let $H$ be the queries made by $\samp^{f_0}$ on the all-zeroes function $f_0$.)
    
    A standard application of the probabilistic method (see Appendix~\ref{app:proofs}) shows that there do exist deterministic averaging samplers (i.e. PRGs) with polynomial query complexity for ordered branching programs of polynomial size.  Specifically, for ordered branching programs of length $n$ and width $w$, there exists a deterministic averaging $\ep$-sampler with query complexity $\poly(nw/\ep)$, and this is optimal.
    However, constructing an \textit{explicit} deterministic sampler with matching query complexity has been a longstanding open problem. The classic deterministic averaging samplers (i.e. PRGs) of Nisan~\cite{Nisan92} and Impagliazzo--Nisan--Wigderson~\cite{ImpagliazzoNiWi94} have query complexity $\exp(\Theta(\log(n)\log(nw/\ep)))$, and this has not been improved except where $w\leq 3$ or $\ep=n^{-\omega(1)}$ or $w=n^{-\omega(1)}$.\\

    However, in other models the picture is not so clear. There has been extensive work on \textit{permutation branching programs}, which are a subset of ordered branching programs that posses additional structure.

    \begin{definition}\label{def:prbp}
        An \textbf{(ordered) permutation branching program} of length $n$, and width $w$ is an ordered branching program where for all $t \in [n]$ and $\sigma \in \zo$, $B_t(\cdot,\sigma)$ is a permutation on $[w]$.
    \end{definition}
    Similarly to how general ordered branching programs can simulate walks on general directed graphs of outdegree 2, permutation branching programs can simulate walks on 2-{\em regular} and {\em consistently labelled} directed graphs $G$. Indeed, for every such graph $G$, the ordered branching program $B^{G,u,v,n}$ defined earlier will be a permutation branching program.

    There are several constructions of deterministic averaging samplers for permutation branching programs that beat the classic Nisan and INW analyses in the constant width regime~\cite{BravermanRaRaYe10,KouckyNiPu11,De11,Steinke12}. We consider the opposite regime, that of \textit{unbounded width permutation branching programs with a single accept state}. This model was introduced by Hoza, Pyne and Vadhan~\cite{HozaPyVa21}, and corresponds to derandomizing walks with \textit{no} constraint on the size of the graph, merely requiring it to be consistently labeled.
    
    In the case of \textit{one-sided} derandomization, they established that deterministic hitters for bounded-width permutation branching programs are also deterministic hitters for unbounded width permutation BPs:
    \begin{proposition}[\cite{HozaPyVa21} Proposition 7.1]\label{hpv:hsg}
        Given $n\in \N$ and $\delta>0$, there is a value $w=O(n^2/\delta)$ such that if $H\subseteq \zo^n$ is a deterministic $\delta$-hitter for permutation branching programs of length $n$ and width $w$, then $H$ is a deterministic $2\delta$-hitter for permutation branching programs of length $n$ and unbounded width (with a single accept state).
    \end{proposition}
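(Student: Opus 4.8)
The plan is to reduce the unbounded-width case to the width-$w$ case by showing: every permutation branching program $B$ of length $n$ with a single accept state $\va$ and $\Pr[B(U_n)=1]>2\delta$ is ``approximated from below'' by a width-$w$ permutation branching program $B'$, meaning $B'^{-1}(1)\subseteq B^{-1}(1)$ and $\Pr[B'(U_n)=1]>\delta$. Granting this, Proposition~\ref{hpv:hsg} is immediate: since $\Pr[B'(U_n)=1]>\delta$ and $H$ is a $\delta$-hitter for width $w$, there is some $x\in H$ with $B'(x)=1$, and then $B(x)=1$ as well, so $H$ hits $B$.

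To build $B'$ I would first exploit the single-accept-state hypothesis. Write $a_r(v)=\Pr_{y}[B[v,y]=\va]$ for the acceptance probability of a layer-$r$ state $v$. Because the transitions are permutations, any suffix $y$ determines a \emph{unique} layer-$r$ state from which $y$ reaches $\va$; hence the accepting-suffix sets of distinct layer-$r$ states are disjoint, and $\sum_{v\in V_r}a_r(v)\le 1$. In particular, for $\eta=\Theta(\delta/n)$ the ``relevant'' sets $S_r:=\{v\in V_r: a_r(v)>\eta\}$ satisfy $|S_r|<1/\eta=O(n/\delta)$, with $\vs\in S_0$ and $S_n=\{\va\}$. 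Moreover $a_r(v_r)$ is a martingale along a random walk $\vs=v_0,v_1,\dots,v_n$, and optional stopping at the first exit from the $S_r$'s shows that a walk which ends at $\va$ leaves some $S_r$ with probability at most $O(n\eta)<\delta$. So the ``pruned'' program, which accepts exactly the inputs whose walk both reaches $\va$ and stays inside $\bigcup_r S_r$, under-approximates $B$ and still accepts more than a $\delta$ fraction of inputs.

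The main obstacle is that this pruned program is not a permutation branching program: a transition $B_r(v,\sigma)$ from $v\in S_{r-1}$ may leave $S_r$, and dually some state of $S_r$ may have no $\sigma$-preimage inside $S_{r-1}$, so the map induced on the relevant sets is only a partial injection. I would fix this by enlarging every layer with two kinds of auxiliary states. First, \emph{dead} states: every out-of-$S$ transition, and every previously dead state, is routed into a dead state, and dead states map only to dead states thereafter --- so a walk that ever goes dead can never reach $\va$. Second, unreachable \emph{filler} states, used solely to complete each partial injection to a genuine permutation (in particular to provide $\sigma$-preimages for the $S_r$-states missed by the pruned map, in such a way that no reachable walk ever enters them). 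One then verifies that the reachable part of $B'$ simulates $B$ restricted to staying in $\bigcup_r S_r$ exactly, so $B'^{-1}(1)\subseteq B^{-1}(1)$ and $\Pr[B'(U_n)=1]>\delta$; and counting the dead states (at most $\sum_r|S_{r-1}|=O(n^2/\delta)$ of them accumulate) bounds the width by $w=O(n^2/\delta)$. The points that need care are checking that the dead and filler states never interfere with the live computation --- so that the containment $B'^{-1}(1)\subseteq B^{-1}(1)$ is exact rather than merely approximate --- and that the bijection requirements at consecutive layers are mutually consistent so that a single width $w$ works for all layers.
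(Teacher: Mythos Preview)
The present paper does not prove this proposition---it is quoted from \cite{HozaPyVa21}---so there is no in-paper proof to compare against directly. Your outline is correct, and in fact the construction you describe is exactly the device the paper itself develops in Section~\ref{sec:HSGtorestrict} for the closely related ``induced hit program'' $B_H$: there the relevant set in layer $i$ is $K_i=\{v:\exists y\in H,\ B[v,y_{1..n-i}]\in\Va\}$ for a hitter $H$, whereas you take $S_r=\{v:a_r(v)>\eta\}$ via an acceptance-probability threshold. In both cases the permutation property bounds $|S_r|$ (your disjoint-suffix observation is the paper's Proposition~\ref{prop:pbpdist}), and your padding with ``dead'' and ``filler'' states to restore the permutation structure is the same construction the paper gives for $B_H$, yielding width $O(n\cdot\max_r|S_r|)$ just as the paper's $(n+2)K$.

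One small point: your optional-stopping bound is looser than it needs to be. Stopping at the first exit time $\tau$, on the event of an exit one has $a_\tau(v_\tau)\le\eta$, while if there is no exit then $v_n\in S_n=\{\va\}$ and $a_n(v_n)=1$; since $\E[a_\tau(v_\tau)]=a_0(\vs)>2\delta$, this gives $\Pr[\text{no exit}]\ge a_0(\vs)-\eta$ and hence $\Pr[\text{exit and reach }\va]\le\eta$, not $O(n\eta)$. So already $\eta=\delta$ suffices and yields width $O(n/\delta)$, stronger than the stated $O(n^2/\delta)$. Your choice $\eta=\Theta(\delta/n)$ is of course still correct and recovers the stated bound; you are simply giving away a factor of $n$.
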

    This result, together with an accompanying lower bound (see Claim~\ref{clm:HSGlb}), established that optimal hitters for bounded-width permutation branching programs imply optimal hitters for the unbounded width case, but says nothing about two-sided derandomization. 
    
    In the two-sided regime, they constructed an explicit deterministic averaging $\ep$-sampler for unbounded width permutation branching programs with query complexity $\exp(\tO(\log(n)\log(1/\ep)))$. Moreover, they showed an unconditional lower bound on the query complexity of deterministic averaging samplers of $\exp(\tilde{\Omega}(\log(n)\log(1/\ep)))$. They also showed a random set $Q$ of points in $\zo^n$ fails to produce a deterministic averaging sampler whp unless $|Q|=\exp(\Omega(n))$, so in contrast to the case of general ordered branching programs, they obtained an explicit deterministic sampler with exponentially smaller query complexity than is obtained via the probabilistic method. However, there remained a gap between their upper bound of $\exp(\tO(\log(n)\log(1/\ep)))$ and the lower bound of $(n/\ep)^{\Omega(1)}=\exp(\Omega(\log(n/\ep)))$ on the query complexity of general (possibly non-averaging) deterministic $\ep$-samplers for this model.

    Next, Pyne and Vadhan~\cite{PyneVa21a} constructed an explicit deterministic $\ep$-sampler (in fact a weighted pseudorandom generator, which outputs a fixed linear combination of the queried points) for the model with query complexity $\exp(\tO(\log(n)\sqrt{\log(n/\ep)}+\log(1/\ep)))$. This deterministic sampler obtains smaller query complexity than every deterministic \textit{averaging} sampler when $\ep=n^{-\Omega(1)}$. Thus, they obtained an unconditional separation between the query complexity of deterministic averaging and general deterministic samplers for the model. However, this left a gap between the lower bound on query complexity of $\exp(\Omega(\log(n/\ep)))$ and the upper bound of $\exp(\min\{\tO(\log(n)\sqrt{\log(n/\ep)}+\log(1/\ep)),\tO(\log(n)\log(1/\ep))\})$ required for two-sided derandomization of the model. In addition, their construction was highly involved, and relied on sophisticated results in spectral graph theory~\cite{CKKPPRS18,AhmadinejadKeMuPeSiVa19}, as well as the connection between the INW generator on permutation branching programs and the derandomized square of Rozenman and Vadhan~\cite{RozenmanVa05}.

\subsection{Our Contribution}
Our main result is to resolve the query complexity of derandomizing unbounded-width permutation branching programs.
\begin{theorem}\label{thm:mainNonExplicit}
    There is a non-explicit deterministic nonadaptive $\ep$-sampler for permutation branching programs of length $n$ and unbounded width (with a single accept state) that has query complexity $\poly(n/\ep)$.
\end{theorem}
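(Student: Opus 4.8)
\section{Proof plan for Theorem~\ref{thm:mainNonExplicit}}

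The plan is to reduce estimating $\Pr[B(\UN){=}1]$ for an unbounded-width permutation branching program $B$ (with a single accept state) to the same task for a permutation branching program of width $w=\poly(n/\ep)$, in a way compatible with black-box query access, and then to invoke the (non-explicit) optimal pseudorandom generators that exist for bounded width by the probabilistic method. Concretely, fix a threshold $\eta=\Theta(\ep/n)$. For a layer index $t$ write $p_t(v)$ for the probability a length-$t$ walk from the start state reaches $v$, and $q_t(v)$ for the probability a length-$(n-t)$ walk from $v$ reaches the accept state, so that $\Pr[B(\UN){=}1]=\sum_v p_t(v)q_t(v)$ for every $t$. The one structural fact I would use is that, because every layer transition is a permutation and there is a single accept state, for each fixed suffix exactly one state is carried to the accept state; hence $\sum_v q_t(v)=1$, so at most $1/\eta$ states have $q_t(v)\ge\eta$. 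Combined with $\sum_v p_t(v)=1$, this shows that the states with $p_t(v)<\eta$ contribute at most $\eta\sum_v q_t(v)=\eta$ to the acceptance probability ``at layer $t$'', so $\Pr[B(\UN){=}1]$ is determined, up to $\eta$ per layer and hence up to $\ep/2$ overall, by the $O(1/\eta)=\poly(n/\ep)$ ``heavy'' states $p_t(v)\ge\eta$ at each layer, together with their incoming weights and the transitions among them.

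Second, I would turn this into a width reduction: $B$ restricted to its heavy states, with every transition that would leave the heavy set rerouted into a small ``garbage'' region in a permutation-consistent way, is a permutation branching program $B'$ of width $w=\poly(n/\ep)$ with $|\Pr[B'(\UN){=}1]-\Pr[B(\UN){=}1]|\le\ep/2$. Third, to make the reduction usable in the query model I would build the query set $Q\subseteq\zo^n$ from hitters: for each split point $t$, take an $\eta$-hitter $X_t$ for permutation branching programs of length $t$ with a single accept state (these exist of size $\poly(n/\eta)=\poly(n/\ep)$ by Proposition~\ref{hpv:hsg} together with the probabilistic method for bounded width) and an $\eta$-hitter $Z_{n-t-1}$ for length $n-t-1$, and put into $Q$ all strings $x\sigma z$ with $x\in X_t$, $\sigma\in\zo$, $z\in Z_{n-t-1}$, plus a few padding strings; then $|Q|=\poly(n/\ep)$ and the queries are nonadaptive. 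From $B|_Q$ one reads off, for each layer, the heavy states exposed by the prefix-hitters (a prefix $x\in X_t$ reaching such a state), clusters these prefixes by the pattern of their responses to the hitter-suffixes (a clustering that is lossless for the acceptance probability, since two prefixes whose suffix-responses agree lead to states with the same behavior on the suffixes and hence, after a short argument, the same relevant $q_t$-contribution), and reads the transitions from the extra bit $\sigma$; assembling this across all layers yields a width-$\poly(n/\ep)$ permutation branching program $B^\star$ with $B^\star|_Q=B|_Q$ and $|\Pr[B^\star(\UN){=}1]-\Pr[B(\UN){=}1]|\le\ep/2$. The sampler then outputs $\Pr[B^\star(\UN){=}1]$; this is well defined up to $\ep/2$ — and hence the whole estimate is within $\ep$ — provided $Q$ also contains the query set of an $(\ep/4)$-pseudorandom generator for width-$\poly(n/\ep)$ permutation branching programs, which exists with query complexity $\poly(n/\ep)$ by the probabilistic method. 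Non-explicitness enters exactly through this generator (and the hitters), which is why the construction becomes explicit if an optimal explicit such generator is available.

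The step I expect to be the main obstacle is the assembly of $B^\star$: it must be simultaneously (i) consistent with $B|_Q$, (ii) a genuine \emph{permutation} branching program with a single accept state, and (iii) within $\ep/2$ of $B$ in acceptance probability, and these pull against each other. Maintaining the permutation property forces one to reroute the ``non-heavy'' mass through garbage states rather than simply delete it, and one must check this changes the number of accepting walks by at most $\eta$ per layer. Worse, the natural notion of ``heavy'' degrades by a factor of two from one layer to the next (a walk probability can only halve in a step), so a naive induction would let the threshold decay to $\eta/2^{\,n}$; avoiding this — by rerouting based on the $q$-value of the \emph{target} state rather than on its weight, so that a rerouted edge contributes error at most $\eta$ times its mass, and by a careful global accounting of the total rerouted mass over all $n$ layers — is the technical heart of the argument. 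Everything else (the hitter size bounds, the clustering, and the final triangle inequality combining the width reduction with the bounded-width generator) is routine once these structural facts are in place.
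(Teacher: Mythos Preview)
Your high-level strategy---width reduction to $\poly(n/\ep)$, then a non-explicit bounded-width sampler---is the paper's, and you correctly isolate $\sum_v q_t(v)=1$ as the key structural fact. You even diagnose that thresholding on $p_t$ degrades by factors of two and that one should instead reroute based on the $q$-value of the target; that fix is exactly what the paper does. The gap is in how you implement oracle access to the reduced program. You propose to \emph{reconstruct} $B^\star$ by clustering prefixes $x\in X_t$ according to their response pattern on the suffix hitter $Z_{n-t}$, then reading transitions from the extra bit $\sigma$. But two prefixes landing at distinct states may be indistinguishable on all of $Z_{n-t}$ and hence merged, while $x\sigma$ and $x'\sigma$ become distinguishable on $Z_{n-t-1}$, so the transition out of the merged class is ill-defined; dually, $x\sigma$ need not match any element of $X_{t+1}$. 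Making these clusterings consistent across all $n$ layers while keeping width $\poly(n/\ep)$ and the permutation property is precisely the obstacle you flag yourself, and nothing in the plan resolves it.

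The paper avoids reconstruction entirely. It keeps at layer $i$ precisely the states from which some suffix drawn from the hitter $H$ reaches $\va$ (your $q$-based criterion), calls the result $B_H$, and observes the identity
\[
B_H(x)\;=\;\bigwedge_{i=0}^{n}\ \bigvee_{y\in H}\ B\bigl(x_{1..i}\,||\,y_{i+1..n}\bigr),
\]
so $B_H$ can be evaluated at any single point using $(n{+}1)\,|H|$ oracle calls to $B$. One then simply runs the bounded-width sampler on $B_H$, answering each of its queries via this formula; there is no clustering, no layer-to-layer stitching, and no need to name states. Once you commit to the $q$-based cut (as you yourself suggest), this one-line evaluation identity is the missing idea that dissolves your ``assembly of $B^\star$'' obstacle.
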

Thus, we establish the optimal query complexity for deterministic algorithms estimating the fraction of fixed-length walks from $u$ that end at $v$ for arbitrary $u,v$ in an arbitrarily sized consistently-labeled graph. Furthermore, we obtain a deterministic sampler that achieves query complexity $\poly(n)$ for $\ep=1/\poly(n)$, whereas every deterministic \textit{averaging} sampler with these parameters has query complexity $\exp(\Omega(\log^2n))$~\cite{HozaPyVa21}. This gives a simple unconditional separation between averaging samplers and general nonadaptive samplers in the no-randomness regime with respect to a natural computational model.

We prove this result via a reduction from the unbounded-width case to the bounded-width case. We show that an optimal family of samplers for bounded-width permutation branching programs can be used to construct an optimal sampler for unbounded-width ones. Since optimal non-explicit samplers for the bounded-width case exist via the probabilistic method (See Appendix~\ref{app:proofs}), this immediately establishes our result.

We now state the reduction. For the remainder of the paper, rather than working with branching programs with a single accept state $\va\in V_n$, we allow branching programs to have a set $\Va\subseteq V_n$ of accept vertices, where $B(x)=1$ if $B[\vs,x]\in  \Va$. We let $a=|\Va|$ be the number of accept vertices.
\begin{restatable}{theorem}{main}\label{thm:main}
    Let $\fS=\{\samp_{n,w,\ep}\}$ be a family of deterministic $\ep$-samplers $\samp_{n,w,\ep}$ for permutation branching programs of length $n$ and width $w$ such that $\samp_{n,w,\ep}$ has query complexity $\poly(nw/\ep)$. From $\fS$, we can construct a deterministic $\ep$-sampler $\samp'_{n,a,\ep}$ for permutation branching programs of length $n$ and unbounded width with $a$ accept vertices that has query complexity $\poly(na/\ep)$. Moreover, if the samplers $\samp_{n,w,\ep}$ in $\fS$ are explicit then so is $\samp'_{n,a,\ep}$, and if the samplers $\samp_{n,w,\ep}$ in $\fS$ are non-adaptive then so is $\samp'_{n,a,\ep}$.
\end{restatable}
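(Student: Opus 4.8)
The plan is to reduce an unbounded-width permutation branching program $B$ to a bounded-width one without losing much information about $\Pr[B(U_n)=1]$, and then apply the given family $\fS$ of bounded-width samplers as a black box. The key observation is that, although $B$ may have enormously many vertices in each layer, only a bounded number of them are ``reachable and relevant'': starting from $\vs\in V_0$, a random walk of length $n$ visits at most one vertex per layer, and what matters for acceptance is only whether the walk lands in $\Va$ at layer $n$. So I would first argue that we can fold $B$ down to a width-$w$ permutation branching program $\hat B$, for $w=\poly(na/\ep)$, such that $|\Pr[\hat B(U_n)=1]-\Pr[B(U_n)=1]|\le \ep/2$, say. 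To do this I would run the walk ``backwards'' from $\Va$: define, for each layer $r$, the set $S_r\subseteq V_r$ of vertices from which the accept set is reachable in $n-r$ steps with probability at least some threshold $\tau$. Because each $B_t(\cdot,\sigma)$ is a permutation, the backward dynamics are also governed by permutations, and a Markov/counting argument bounds $|S_r|$ in terms of $a$, $n$, and $\tau$ — collapsing everything outside $S_r$ into a single ``garbage'' state (or a few of them) that behaves as a permutation. Choosing $\tau\approx \ep/n$ controls the total probability mass lost along the walk by a union bound over the $n$ layers.

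The second step is to make the collapsed program genuinely a \emph{permutation} branching program of bounded width, since naively merging states can destroy the permutation property (two preimages under $B_t(\cdot,\sigma)$ could land in the same bucket). The fix is to be careful about \emph{which} states we merge: since $B_t(\cdot,\sigma)$ is a bijection $V_{t-1}\to V_t$, the preimage of the relevant set $S_t$ is a set of the same size, so we can define $\hat S_{t-1} = S_{t-1}\cup B_t^{-1}(\hat S_t,\cdot)$ built up inductively from the last layer, and on the complement we just let the program act as the identity (or as any fixed permutation consistent across labels). One then pads each $\hat S_r$ to a common width $w$ and extends the transition functions to permutations on $[w]$ arbitrarily; the accept set is $\hat S_n\cap \Va$ padded appropriately. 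The resulting $\hat B$ is a width-$w$ permutation branching program with $\Pr[\hat B(U_n)=1]$ within $\ep/2$ of the original.

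The third step is essentially bookkeeping: apply $\samp_{n,w,\ep/2}$ from $\fS$ to $\hat B$, with $w=\poly(na/\ep)$, to get an estimate within $\ep/2$ of $\Pr[\hat B(U_n)=1]$, hence within $\ep$ of $\Pr[B(U_n)=1]$; the query complexity is $\poly(nw/\ep)=\poly(na/\ep)$ as required. The crucial point for black-box access is that each query $x\in\zo^n$ to $\hat B$ can be answered by a single query to $B$: the compression is a purely ``conceptual'' operation on the graph, and a walk in $\hat B$ that stays in the reachable sets agrees with the walk in $B$, while a walk that falls into garbage is rejected — and one checks that $\hat B(x)$ can be computed from $B(x)$ together with the a priori (graph-independent, by the permutation structure) description of where the garbage states are. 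This preserves explicitness (the reachable sets and their sizes depend only on $n,a,\ep$ up to the actual transition structure, and the padding is uniform) and nonadaptivity (the queries to $\hat B$ are exactly the queries $\samp$ makes, mapped identically to queries to $B$).

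The main obstacle I anticipate is the second step: simultaneously (a) bounding the width by $\poly(na/\ep)$, (b) keeping the merged transition functions permutations, and (c) ensuring that the acceptance probability is preserved up to $\pm\ep/2$ under the merge. The tension is that merging states to save width can break the permutation property, while refusing to merge can blow up the width; resolving it requires the backward-reachability construction to be set up so that the ``garbage'' region is closed under the (bijective) transitions in a label-consistent way. Getting the threshold $\tau$ and the union bound over layers to interact correctly with the permutation constraint — and in particular arguing the garbage mass only ever \emph{leaks out} of the relevant sets and never spuriously leaks back in to inflate the estimate — is the delicate part of the argument.
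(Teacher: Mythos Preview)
Your overall architecture --- keep only the ``relevant'' states in each layer, collapse the rest, then apply a bounded-width sampler --- matches the paper. The width bound via Markov is also right: in a permutation program $\sum_{v\in V_r}\Pr[B[v,U_{n-r}]\in \Va]=a$, so $|S_r|\le a/\tau$.

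The genuine gap is in your third step, where you assert that $\hat B(x)$ can be computed from a single query $B(x)$ because ``the garbage states are graph-independent, by the permutation structure''. This is false. Your sets $S_r=\{v:\Pr[B[v,U_{n-r}]\in \Va]\ge\tau\}$ depend on the acceptance-probability profile of $B$, which is exactly the unknown quantity. Given only oracle access to $B:\zo^n\to\zo$, you have no way to decide whether the state $B[\vs,x_{1..r}]$ lies in $S_r$; answering that question is essentially another instance of the problem you are trying to solve. So you cannot simulate $\hat B$ black-box, and the reduction breaks. (Your inductive closure $\hat S_{t-1}=S_{t-1}\cup\bigcup_\sigma B_t^{-1}(\hat S_t,\sigma)$ has a separate problem: it can grow geometrically across layers, since each $\sigma$ contributes a preimage set of size $|\hat S_t|$.)

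The paper's fix is to replace the probability threshold by a \emph{hitter}. From $\fS$ one first extracts (via the sampler-to-hitter argument and Proposition~\ref{hpv:hsg}) a small deterministic hitter $H\subseteq\zo^n$ for unbounded-width permutation programs. The kept states in layer $i$ are then
\[
K_i=\{v\in V_i:\exists\,y\in H\text{ with }B[v,y_{1..n-i}]\in \Va\}.
\]
Two things now work that did not before. First, any state with acceptance probability above the hitter threshold is automatically in $K_i$ (otherwise $H$ would fail to hit the program ``start at $v$, accept in $\Va$''), so your union-bound argument over layers still gives $|\Pr[B_H(U_n)=1]-\Pr[B(U_n)=1]|\le\ep/2$. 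Second, and crucially, membership of the walk in $K_i$ is black-box testable: $B[\vs,x_{1..i}]\in K_i$ iff $\bigvee_{y\in H}B(x_{1..i}\|y_{1..n-i})=1$. Hence
\[
B_H(x)=\bigwedge_{i=0}^n\ \bigvee_{y\in H} B(x_{1..i}\|y_{1..n-i}),
\]
which costs $(n+1)|H|$ oracle queries to $B$ per evaluation of $B_H$, preserving nonadaptivity and explicitness. The missing idea in your proposal is precisely this use of a hitter as a query-efficient witness for ``relevance''.
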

Note that this reduction preserves explicitness and (non-)adaptiveness, so optimal \textit{explicit} deterministic samplers (for instance, optimal explicit PRGs) for the bounded width case imply explicit deterministic samplers for unbounded-width permutation BPs that have optimal space complexity $O(\log(na/\ep))$. Put differently, optimal black-box two-sided derandomization of permutation branching programs with $k$ vertices in \textit{all} layers is no harder than derandomization of permutation branching programs with $k$ accept vertices in the \textit{final} layer and no restriction on width.

We summarize the current known derandomizations for unbounded-width permutation branching programs. An entry of ``Conditional'' means an optimal explicit construction for permutation branching programs of bounded width would imply an explicit construction.
\begin{center}
\begin{tabular}{c|c|c|c}
    Object & Query Complexity & Explicit? & Reference\\\hline
    Det. Hitter & $\exp(O(\log(na/\ep)))$ & Conditional & \cite{HozaPyVa21}\\
    PRG & $\exp(\widetilde{O}(\log(n)\log(a/\ep)))$ & Yes & \cite{HozaPyVa21}\\
    WPRG & $\exp(\tO(\log(n)\sqrt{\log(na/\ep)}+\log(a/\ep)))$ & Yes & \cite{PyneVa21a}\\
    Det. Sampler & $\exp(O(\log(na/\ep)))$ & Conditional & This work.
\end{tabular}
\end{center}

\subsection{Proof Overview}
Our construction is very simple, and in contrast to prior work on the model~\cite{HozaPyVa21,PyneVa21a} the proof uses neither special properties of the INW PRG~\cite{ImpagliazzoNiWi94}, nor results from spectral graph theory.

The key idea behind Theorem~\ref{thm:main} is that compositions of layers of permutation branching programs themselves define permutations. More concretely, fixing a permutation branching program $B$, an input $x\in \zo^{n-i}$ and an accept state $v_f \in \Va$ in the final layer, there is at most \textit{one} state $v$ in layer $i$ such that $B[v,x]=v_f$. With this observation, we can use a sparse set of strings $T\subseteq \zo^n$ to restrict the branching program. For every layer $V_i$, we remove all states $v\in V_i$ where for all $x\in T$, $B[v,x_{1..n-i}]\notin \Va$. Since each element of $T$ can cause at most $a=|\Va|$ vertices in every layer to be included in the restricted program, which we denote $B_T$, we limit the width of $B_T$ to at most $|T|\cdot a$. Furthermore, by adding $n|T|$ dummy states we have that $B_T$ can be computed by a permutation branching program.

We next show that there is a sparse set $T$ such that the restriction induced by $T$ is a good approximation of the original program. We take $T$ to be the set of points queried by a deterministic hitter for permutation branching programs of unbounded width. To obtain $T$ from our hypothesis, we use a result of HPV, which proves that samplers for the bounded-width case are \textit{hitters} for the unbounded-width case (Proposition~\ref{hpv:hsg}). We show that the states not included in the restricted program are unimportant, in that removing all of them simultaneously only changes the acceptance probability $\Pr[B(U_n)=1]$ by at most $\ep/2$.

Then to estimate the acceptance probability of the restricted program, we use a \textit{second} sampler that is good against branching programs of width $|T|\cdot a$, and return the output of the sampler on the restricted program $B_T$. Unfortunately, even if $T$ is explicit it is unclear how to learn $B_T$ given only oracle access to the original program $B$. To avoid having to do so, we construct a way to compute $B_T(x)$ for arbitrary $x\in \zo^n$ given only oracle access to $B$; we apply this procedure whenever the second sampler queries $B_T$. Thus we obtain a good estimate of $\Pr[B_T(U_n)=1]$, which is itself a good estimate of $\Pr[B(U_n)=1]$, so we conclude.

\subsection{Organization}\label{subsec:org}
In Section~\ref{sec:samplertoHSG} we recall that a deterministic $\ep$-sampler for a class of functions implies a deterministic $2\ep$-hitter for that class, and use this to establish the optimal space and query complexity of samplers for unbounded-width permutation BPs. In Section~\ref{sec:HSGtorestrict} we prove that hitters can be used to restrict unbounded-width permutation BPs to bounded width, and that this restriction can be done in a black-box manner, and this restriction is a good approximation of the original program. Then in Section~\ref{sec:main}, we combine these two results and prove the main theorem.

\section{Samplers Imply Hitters}\label{sec:samplertoHSG}
We first recall that an arbitrary deterministic $\ep$-sampler for a model that includes the all-zeroes function (which includes functions computed by permutation branching programs of width at least $2$) induces a deterministic $2\ep$-hitter for the model. We use this to establish tight lower bounds on space and query complexity for deterministic samplers for permutation branching programs of unbounded width. 
\begin{proposition}\label{prop:inducedhsg}
    Let $\F$ be a class of functions $f : \zo^n \ra \zo$ that includes the constant function $f_0(x)=0$. Then if $\A$ is an deterministic $\ep$-sampler for $\F$, the set of queries $Q$ made by $\A$ on the all-zeroes function $f_0$ is a deterministic $2\ep$-hitter for $\F$, and moreover $Q$ is explicit if $\A$ is.
\end{proposition}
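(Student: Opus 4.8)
The plan is a standard ``sampler implies hitter'' argument, with care taken to handle possible adaptivity of $\A$. First I would fix the sampler $\A$ and let $Q \subseteq \zo^n$ be the set of queries that $\A$ makes when every query it issues is answered with $0$; since $f_0$ is exactly the oracle that answers $0$ everywhere, $Q$ is precisely the query set of $\A^{f_0}$. By the sampler guarantee, $|\A^{f_0}() - \Pr[f_0(U_n) = 1]| = |\A^{f_0}()| \le \ep$, so $\A^{f_0}() \le \ep$.

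The key step is to observe that $\A$ cannot distinguish $f_0$ from any $f \in \F$ that vanishes on all of $Q$. Suppose $f \in \F$ satisfies $f(x) = 0$ for every $x \in Q$. Running $\A$ with oracle $f$ and running it with oracle $f_0$ produce identical executions: by induction on the number of queries issued so far, the next query is determined by $\A$'s internal state together with the answers received, and those answers agree in the two executions, since each such query lies in $Q$, where $f$ and $f_0$ both return $0$. Hence $\A^f$ issues exactly the queries $Q$ and outputs $\A^f() = \A^{f_0}() \le \ep$. Applying the sampler guarantee to $f$ gives $\Pr[f(U_n) = 1] \le \A^f() + \ep \le 2\ep$. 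Taking the contrapositive: if $\Pr_{x \la U_n}[f(x) = 1] > 2\ep$, then $f(x) = 1$ for some $x \in Q$, which is exactly the statement that $Q$ is a deterministic $2\ep$-hitter for $\F$.

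For explicitness, I would note that $f_0$ is computable with no space overhead, so if $\A$ is explicit --- computable by a uniform algorithm in space $O(\log S)$, where $S$ is its query complexity --- then one can enumerate $Q$ by simulating $\A^{f_0}$ and printing each query as it is made, using space $O(\log S)$. Since $f_0 \in \F$, we may assume without loss of generality (padding with repeated dummy queries if needed) that $\A$ makes exactly $S$ distinct queries on $f_0$, so that $|Q| = S$ and the enumeration runs in space $O(\log |Q|)$, as required by the definition of an explicit hitter.

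The proof is essentially immediate; the only point demanding attention is the inductive argument that the (possibly adaptive) queries of $\A$ coincide on $f$ and $f_0$, and the minor bookkeeping in the explicitness claim needed to match the space bound $O(\log |Q|)$ rather than $O(\log S)$.
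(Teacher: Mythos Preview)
Your proof is correct and follows essentially the same approach as the paper's: both argue that if $f$ vanishes on $Q$ then $\A^f() = \A^{f_0}()$, forcing the sampler to err on one of the two functions unless $\Pr[f(U_n)=1]\le 2\ep$. Your version is more carefully written---in particular, you spell out the inductive argument handling adaptivity and the explicitness claim, both of which the paper leaves implicit---but the underlying idea is identical.
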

\begin{proof}
    Assuming for contradiction this is not the case, there is $f\in \F$ such that $f(Q)=0$ but $\Pr[f(U_n)=1]> 2\ep$. But since the sampler must output a single estimate $\samp^f()=\samp^{f_0}()$ for $f$ and the all-0 program $f_0$ (since the value of both functions on all queried points are identical), it must fail to estimate the acceptance probability of one to within $\ep$, a contradiction.
\end{proof}

We then recall the optimal seed length for deterministic hitters for permutation branching programs of unbounded width.
\begin{restatable}[\cite{HozaPyVa21} Claim 7.3]{claim}{HSGlb}\label{clm:HSGlb}
    Given $n,a\in \N$ and $\ep\in (1/4,0)$ such that $1/2> \ep/a\geq 2^{-n}$, let $H\subseteq \zo^n$ be a deterministic $\ep$-hitter for permutation branching programs of unbounded width and at most $a$ accept vertices. Then $|H|=(na/\ep)^{\Omega(1)}=\exp(\Omega(\log(na/\ep)))$.
\end{restatable}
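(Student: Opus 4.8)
The plan is to prove the contrapositive: given any $H\subseteq\zo^n$ that is too small, construct a permutation branching program of length $n$, unbounded width, and at most $a$ accept vertices whose acceptance probability strictly exceeds $\ep$ but which vanishes on all of $H$. I would obtain the bound from two independent families of hard instances and then combine them by a geometric‑mean trick.

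\emph{The subcube family.} For a coordinate set $W\subseteq[n]$ with $|W|=k$ and a set $A\subseteq\zo^W$, consider the width‑$2^k$ program $B_{W,A}$ that applies the identity transition on every bit outside $W$ and, on the bit $x_r$ for $r\in W$, XORs $x_r$ into the $r$‑th coordinate of its state (so each transition is the identity or a single coordinate flip — genuine permutations), and accepts iff the recorded string $x|_W$ lies in $A$. Then $B_{W,A}$ is a permutation branching program of length $n$ with $|A|$ accept vertices and $\Pr[B_{W,A}(U_n)=1]=|A|/2^k$. Choosing $k=\Theta(\log(a/\ep))$ — which is at most $n$ precisely because $\ep/a\ge 2^{-n}$ — together with $|A|=\lfloor\ep 2^k\rfloor+1$ (so the acceptance probability is strictly above $\ep$) subject to $|A|\le a$ (possible because $\ep/a<1/2$; when $a=1$ one instead takes $|A|=1$ and $2^k<1/\ep$), one sees that any hitter $H$ must intersect $\{x:x|_W\in A'\}$ for \emph{every} $A'\subseteq\zo^W$ with $|A'|=|A|$. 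This forces $|\{x|_W:x\in H\}|>2^k-|A|=\Omega(a/\ep)$, hence $|H|=\Omega(a/\ep)$.

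\emph{The parity family.} For a nonzero $c\in\mathbb{F}_2^n$, the width‑$2$ permutation branching program that maintains the running value of $\langle c,x\rangle$ (XORing $x_r$ into the single state bit whenever $c_r=1$) and accepts iff this bit equals $0$ has one accept vertex and acceptance probability exactly $1/2>\ep$. A hitter must hit both the ``$=0$'' and the ``$=1$'' versions of this program for every nonzero $c$, i.e.\ $h\mapsto\langle c,h\rangle$ must be non‑constant on $H$ for every nonzero $c$. But if $|H|\le n$ then the affine hull of $H$ has dimension $<n$, so some nonzero $c$ is orthogonal to all differences of points of $H$ and hence $\langle c,\cdot\rangle$ is constant on $H$, a contradiction. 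Therefore $|H|\ge n+1$.

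Combining, any hitter satisfies $|H|\ge\max\{\Omega(a/\ep),\,n+1\}\ge\Omega\!\left(\sqrt{(a/\ep)\cdot(n+1)}\right)=\exp(\Omega(\log(na/\ep)))$, which is the claim (with polynomial exponent $\tfrac12$). The step I expect to require the most care is the subcube family: verifying that $B_{W,A}$ is literally a permutation branching program (hence XOR rather than overwrite into the state), and juggling the integrality constraints on $k$ and $|A|$ so that simultaneously the width is realizable, the number of accept vertices is at most $a$, the acceptance probability is strictly above $\ep$, and $2^k-|A|=\Omega(a/\ep)$ — with the degenerate case $a=1$ (and, if the error regime were extended toward $\ep=1/2$, the case where a single bounded‑width accept state cannot carry probability above $\ep$) handled separately, e.g.\ via modular counting programs or via the reduction to bounded width in Proposition~\ref{hpv:hsg}. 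The parity family and the geometric‑mean combination are routine.
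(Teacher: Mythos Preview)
Your proposal is correct and follows essentially the same approach as the paper: the paper also proves the two bounds separately, using a parity function (width $2$, one accept state, acceptance probability $1/2$) to force $|H|\ge n$, and a ``record the first $\ell\approx\log(a/\ep)$ bits and accept on $a$ unreached values'' program to force $|H|=\Omega(a/\ep)$, which together give $|H|=(na/\ep)^{\Omega(1)}$. Your write-up differs only cosmetically --- you record an arbitrary coordinate set $W$ rather than a prefix, you are explicit about using XOR transitions to preserve the permutation property, and you use the affine rather than linear hull in the parity step --- but the underlying argument is the same.
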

We recall the proof in Appendix~\ref{app:proofs}.
From this, we can derive the optimal space and query complexity of a sampler.
\begin{corollary}\label{cor:samplerLB}
    Given $1/8> \ep\geq 2^{-n}$ and $n,a\in\N$, let $\A$ be a deterministic $\ep$-sampler for permutation branching programs of length $n$ and unbounded width with at most $a$ accept vertices, where $\ep/a\geq 2^{-n}$. Then $\A$ has query complexity $(na/\ep)^{\Omega(1)}$. Moreover, if $\A$ is explicit it has space complexity $s=\Omega(\log(na/\ep))$.
\end{corollary}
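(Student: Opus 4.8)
The plan is to derive the corollary directly from Proposition~\ref{prop:inducedhsg} and Claim~\ref{clm:HSGlb}, with no new ideas needed. First I would check that the class in question contains the constant function $f_0$ with $f_0(x)=0$: a permutation branching program of width $2$ with all transition functions equal to the identity, start state $\vs=1$, and accept set $\Va=\{2\}$ computes $f_0$ and has $a=1\le a$ accept vertices, so it lies in the class of permutation branching programs of length $n$ and unbounded width with at most $a$ accept vertices (using $a\ge 1$). Hence Proposition~\ref{prop:inducedhsg} applies: the set $Q$ of queries made by $\A$ on $f_0$ is a deterministic $2\ep$-hitter for this class, $Q$ is explicit if $\A$ is, and $|Q|$ is at most the query complexity of $\A$.

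Next I would invoke Claim~\ref{clm:HSGlb} with error parameter $\ep':=2\ep$ in place of $\ep$, applied to the hitter $H:=Q$. I must verify its hypotheses: $\ep'\in(0,1/4)$ since $\ep<1/8$; $\ep'/a=2\ep/a<1/2$ since $a\ge 1$ and $\ep<1/8$; and $\ep'/a=2(\ep/a)\ge 2\cdot 2^{-n}\ge 2^{-n}$ by the hypothesis $\ep/a\ge 2^{-n}$. Claim~\ref{clm:HSGlb} then gives $|Q|=(na/\ep')^{\Omega(1)}=(na/\ep)^{\Omega(1)}$, the constant factor $2$ being absorbed into the $\Omega(\cdot)$. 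Since $|Q|$ lower-bounds the query complexity $S$ of $\A$, this establishes the first assertion $S=(na/\ep)^{\Omega(1)}$.

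For the space bound, I would use the standard fact that an algorithm that halts while using space $s$ (on an input of the relevant size) runs for at most $2^{O(s)}$ steps, and therefore makes at most $2^{O(s)}$ distinct oracle queries; hence its query complexity satisfies $\log S=O(s)$. Combining this with the lower bound $S=(na/\ep)^{\Omega(1)}$ from the first part yields $s=\Omega(\log S)=\Omega(\log(na/\ep))$, as claimed. (Note that the defining requirement of explicitness, $s=O(\log S)$, is an upper bound on $s$ and so does not by itself give the desired lower bound; the query-count argument supplies the matching direction.)

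I do not expect a genuine obstacle: all the substance is contained in Claim~\ref{clm:HSGlb} and Proposition~\ref{prop:inducedhsg}, which I am free to assume. The only points that require care are (i) exhibiting a permutation branching program for $f_0$ within the stated width/accept-vertex budget so that Proposition~\ref{prop:inducedhsg} is applicable, (ii) checking that the parameter restrictions of Claim~\ref{clm:HSGlb} survive replacing $\ep$ by $2\ep$ — both done above — and (iii) not forgetting the elementary ``space $s$ implies at most $2^{O(s)}$ queries'' step that converts the query-complexity lower bound into the space lower bound.
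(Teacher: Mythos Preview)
Your proposal is correct and follows essentially the same route as the paper's proof: apply Proposition~\ref{prop:inducedhsg} to obtain a $2\ep$-hitter $Q$, invoke Claim~\ref{clm:HSGlb} to lower-bound $|Q|$, and then convert the query bound to a space bound via the standard ``space $s$ implies $2^{O(s)}$ queries'' argument. Your version is slightly more careful than the paper's in that you explicitly exhibit a permutation branching program computing $f_0$ and verify the parameter hypotheses of Claim~\ref{clm:HSGlb} after doubling $\ep$, but these are refinements rather than a different approach.
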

\begin{proof}
    We apply Proposition~\ref{prop:inducedhsg} to $\A$ and obtain a deterministic $2\ep$-hitter $Q\subseteq \zo^n$ for permutation branching programs of length $n$ with at most $a$ accept vertices. By Claim~\ref{clm:HSGlb} we obtain $|Q|=(na/\ep)^{\Omega(1)}$ which establishes the claimed bound on query complexity. Furthermore if $\A$ is explicit and has space complexity $s$, it must run in time $2^{O(s)}$ since it is required to halt, and thus its query complexity $|Q|$ is at most $2^{O(s)}$. Combined with the lower bound on $|Q|$, we have that $s=\Omega(na/\ep)$.
\end{proof}

\section{Hitters Induce Bounded-Width Approximators}\label{sec:HSGtorestrict}
We next show that, given a permutation branching program $B$ and a sufficiently good deterministic hitter, the set of states $v\in V_i$ for which there is a hitter output whose prefix reaches an accept state starting from $v$ forms an approximator of the original program. To show this, we define the program ``cut out'' by a deterministic hitter. Then we show that we can evaluate this approximator program on any input given only oracle access to $B$. 
\begin{definition}
    Given a set $H\subseteq \zo^n$ and a permutation branching program $B$ of length $n$ with $a$ accept vertices $\Va$ and vertices $V_0,\dots,V_n$, let the hit states in layer $i$ be 
    \[K_i =\{v\in V_i:\exists x\in H \text{ s.t. } B[v,x_{1..n-i}]\in V_a\}.\]
    WLOG pad all such sets to have size $K=\max_{i=0}^n |K_i|$, where all transitions from padding states in $K_i$ do not lead to $K_{i+1}$.
    The \textbf{induced hit program} $B_H$ is the length $n$ permutation branching program with states in layer $i$ given by $\{K_i\} \cup \{\{0,\ldots,n\}\times [K]\}$, where we identify states in $\{0,\ldots,n\}\times [K]$ by $(j,v)$. For $v\in K_i$, define the transition function
    \[(B_H)_i(v,b) = \begin{cases}
    B_i(v,b) & B_i(v,b) \in K_{i+1}\\
    (i,v) & \text{otherwise.}\end{cases}\]
    Then greedily define transitions for $\{(i,v):v\in [K]\}$ to maintain the permutation property. For all states $(j,v)$ for $j\neq i$, let $(B_H)_i((j,v),b)=(j,v)$.
\end{definition}
We next show that the width of the induced hit program is bounded by the size of the domain of $H$ (and thus its seed length). This will allow us to derandomize the induced hit program as a standard bounded-width permutation branching program.
\begin{lemma}\label{lem:widthbound}
    Given $H\subseteq \zo^n$ and a permutation branching program $B$ of length $n$ with $a$ accept vertices, the width of the induced hit program $B_H$ is at most $|H|\cdot (n+2)\cdot a$.
\end{lemma}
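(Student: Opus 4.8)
The plan is to bound the number of states in each layer of $B_H$ by separating it into the ``hit states'' and the padding states, and then to bound $K=\max_i|K_i|$ using the permutation structure of $B$. The counting is routine once the right structural fact is in hand.

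First I would observe that, directly from the definition, the states of $B_H$ in layer $i$ are the (padded) set $K_i$, which has exactly $K$ elements, together with the $(n+1)\cdot K$ states in $\{0,\dots,n\}\times[K]$. Hence the width of $B_H$ is at most $K+(n+1)K=(n+2)\cdot K$, and it remains only to show $K\le |H|\cdot a$.

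Next I would bound $|K_i|$ for each fixed layer $i$ (before padding). The key point, already flagged in the proof overview, is that for a permutation branching program the partial evaluation map $v\mapsto B[v,x]$ is a composition of the individual layer permutations $B_j(\cdot,x_j)$, and is therefore itself a permutation of $[w]$; in particular it is injective. Consequently, for each fixed string $x\in H$ and each fixed accept vertex $v_f\in\Va$, there is \emph{at most one} state $v\in V_i$ with $B[v,x_{1..n-i}]=v_f$. Since every $v\in K_i$ arises this way for some pair $(x,v_f)\in H\times\Va$, and there are only $|H|\cdot a$ such pairs, we get $|K_i|\le |H|\cdot a$. Taking the maximum over $i$ gives $K\le |H|\cdot a$, and plugging this into the bound from the previous paragraph yields width at most $(n+2)\cdot|H|\cdot a=|H|\cdot(n+2)\cdot a$, as claimed.

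I do not expect any genuine obstacle here: the only things to be careful about are (i) counting the padding states correctly --- there are $n+1$ copies of $[K]$, which is exactly why the final bound has a factor $n+2$ rather than $n+1$ --- and (ii) noting that the injectivity claim really uses that \emph{every} layer transition of $B$ is a permutation, not merely some property of the accept set. Everything else is bookkeeping.
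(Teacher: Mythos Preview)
Your proposal is correct and follows essentially the same approach as the paper: bound the width by $(n+2)\cdot K$ from the definition, then show $|K_i|\le |H|\cdot a$ via the injectivity of the composed permutation maps (the paper isolates this as a separate proposition and union-bounds over accept vertices $u$, whereas you union-bound over pairs $(x,v_f)$, but the arguments are identical).
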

To prove this, we require a proposition essentially showing that composing multiple layers of a permutation branching program produces a permutation branching program of higher degree. This is the only element of the proof that uses the fact that $B$ is a permutation, rather than regular, branching program.
\begin{proposition}\label{prop:pbpdist}
    For every permutation branching program $B$, for every distinct $v,v'\in V_i$ and $\sigma \in \zo^k$ so that $i+k\leq n$, $B[v,\sigma]\neq B[v',\sigma]$.
\end{proposition}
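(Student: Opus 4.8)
\textbf{Proof proposal for Proposition~\ref{prop:pbpdist}.}
The plan is a straightforward induction on the number of steps $k$, using the fact that each individual layer transition of a permutation branching program is a bijection, and that a composition of bijections is a bijection (in particular, injective).

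First I would set up the base case. For $k=0$ the map $v\mapsto B[v,\sigma]$ is the identity on $V_i$, so distinct $v,v'$ stay distinct; alternatively one can take $k=1$ as the base case, where $B[v,\sigma]=B_{i+1}(v,\sigma_1)$ and the claim is exactly the definition of a permutation branching program: $B_{i+1}(\cdot,\sigma_1)$ is a permutation on $[w]$, hence injective, so $v\neq v'$ implies $B_{i+1}(v,\sigma_1)\neq B_{i+1}(v',\sigma_1)$.

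For the inductive step, assume the statement holds for walks of length $k-1$ from any layer. Given $\sigma=\sigma_1\cdots\sigma_k\in\zo^k$ with $i+k\le n$, write $\sigma'=\sigma_1\cdots\sigma_{k-1}$ and note $B[v,\sigma]=B_{i+k}\bigl(B[v,\sigma'],\sigma_k\bigr)$ and likewise for $v'$. By the induction hypothesis applied to $v,v'\in V_i$ and $\sigma'\in\zo^{k-1}$, we have $B[v,\sigma']\neq B[v',\sigma']$, and these are both vertices of layer $V_{i+k-1}$. Since $B$ is a permutation branching program, $B_{i+k}(\cdot,\sigma_k)$ is a permutation on $[w]$, hence injective, so it maps these two distinct vertices to distinct vertices of $V_{i+k}$; that is, $B[v,\sigma]\neq B[v',\sigma]$, completing the induction.

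I do not expect any genuine obstacle here: the only subtlety is bookkeeping with layer indices (making sure the permutation used at the last step is the transition function of layer $i+k$ on symbol $\sigma_k$), and the single structural input is that each $B_t(\cdot,\sigma)$ is a permutation, which is exactly where the permutation (as opposed to merely regular) hypothesis is used, as already flagged in the text.
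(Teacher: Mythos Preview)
Your proof is correct and takes essentially the same approach as the paper's: both argue by induction on $k$, using the base case $k=0$ and the fact that each layer's transition on a fixed symbol is a bijection. The only cosmetic difference is that the paper peels off the \emph{first} symbol (applying the permutation property at layer $i+1$ and then the inductive hypothesis on the remaining suffix), whereas you peel off the \emph{last} symbol (applying the inductive hypothesis on the prefix and then the permutation property at layer $i+k$); both are equally valid ways to say that a composition of bijections is injective.
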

\begin{proof}
    We prove this by induction on $k$. The base case of $k=0$ is vacuously true. Assuming it holds for $k$, let $B$ be an arbitrary permutation branching program and $v,v'\in V_i$ arbitrary distinct states. Let $\sigma\in \zo^{k+1}$ be arbitrary. From the permutation property it must be the case that $u_1=B[v,\sigma_1]\neq B[v',\sigma_1]=u_2$, so $B[v,\sigma]=B[u_1,\sigma_{2..k}]\neq B[u_2,\sigma_{2..k}]=B[v',\sigma]$
    where the inequality follows from the inductive step, and since $\sigma$, $B$ and $v,v'$ were arbitrary we conclude.
\end{proof}
We can then prove Lemma~\ref{lem:widthbound}.
\begin{proof}
    It suffices to show that the number of included states of the original program satisfies $|K_i|\leq |H|\cdot a$ for all $i\in \{0,\ldots,n-1\}$, since the width of $B_H$ is bounded by $(n+2)\cdot K=(n+2)\cdot \max_{i=0}^n|K_i|$. For every fixed accept state $u \in \Va$, there are at most $|H|$ states $v\in V_i$ such that there exists $x\in H$ such that $B[v,x_{1..n-i}] = u$ by Proposition~\ref{prop:pbpdist}, so we conclude via a union bound.
\end{proof}

The induced hit program is well defined for every $H\subseteq \zo^n$. However, we wish to show that the program induced by a sufficiently good deterministic hitter is a close approximation of the original permutation branching program.
\begin{lemma}\label{lem:inducedLA}
    Let $H\subseteq \zo^n$ be a deterministic $\delta/na$-hitter for permutation branching programs of length $n$ and unbounded width with a single accept state. Then for every permutation branching program $B$ of length $n$ and unbounded width with at most $a$ accept vertices, the induced hit program $B_H$ satisfies
    \[|\Pr[B_H(U_n)=1]-\Pr[B(U_n)=1]|\leq \delta.\]
\end{lemma}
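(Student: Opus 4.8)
The plan is to show that $B_H$ never accepts an input that $B$ rejects, so that $\Pr[B_H(U_n)=1]\le\Pr[B(U_n)=1]$, and then to bound the one-sided loss $\Pr[B(U_n)=1]-\Pr[B_H(U_n)=1]$ by $\delta$ by charging it layer by layer to the hitter guarantee. First I would record the basic behaviour of the induced hit program: as long as the $B$-walk from $\vs$ on an input $x$ stays inside the hit sets, i.e.\ $B[\vs,x_{1..j}]\in K_j$, the $B_H$-walk agrees with it; and the first time the $B$-walk leaves some $K_{j+1}$, the $B_H$-walk falls into a dummy state $(j,\cdot)$, from which (by construction, and since $K_n=\Va$ as $H\neq\emptyset$) it can never return to $K_n$. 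Consequently $B_H(x)=1$ if and only if $B[\vs,x_{1..j}]\in K_j$ for all $j$ \emph{and} $B[\vs,x]\in\Va$; in particular $B_H(x)=1\Rightarrow B(x)=1$.

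Given this, the loss equals $\Pr_x[\,B[\vs,x]\in\Va\text{ and }B[\vs,x_{1..i}]\notin K_i\text{ for some }i\,]$, so by a union bound
\[
\Pr[B(U_n)=1]-\Pr[B_H(U_n)=1]\;\le\;\sum_{i=0}^{n-1}\Pr[A_i],\qquad A_i:=\{B[\vs,x_{1..i}]\notin K_i\}\cap\{B[\vs,x]\in\Va\},
\]
where the event $A_n$ is empty because $K_n=\Va$. Splitting $A_i$ according to the state $v=B[\vs,x_{1..i}]$ reached at layer $i$ and using that $x_{1..i}$ and $x_{i+1..n}$ are independent and uniform,
\[
\Pr[A_i]=\sum_{v\in V_i\setminus K_i}\Pr_x[B[\vs,x_{1..i}]=v]\cdot\Pr_{y\sim\zo^{n-i}}[B[v,y]\in\Va].
\]
Thus it suffices to prove that \emph{every} non-hit state $v\in V_i\setminus K_i$ satisfies $\Pr_{y}[B[v,y]\in\Va]\le\delta/n$, since then $\Pr[A_i]\le\delta/n$ and the total loss is at most $n\cdot(\delta/n)=\delta$.

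To bound $\Pr_y[B[v,y]\in\Va]$ for a fixed non-hit state $v\in V_i$, I would split over accept vertices $u\in\Va$ and, for each $u$, build a length-$n$ permutation branching program $\widetilde{B}^{(v,i,u)}$ with a single accept state: its first $n-i$ layers are layers $i+1,\dots,n$ of $B$ with start state $v$, its remaining $i$ layers are the identity permutation, and its accept state is $u$. This is a permutation branching program of unbounded width with one accept state; it satisfies $\Pr[\widetilde{B}^{(v,i,u)}(U_n)=1]=\Pr_y[B[v,y]=u]$; and for every $x\in H$ it holds that $\widetilde{B}^{(v,i,u)}(x)=1$ iff $B[v,x_{1..n-i}]=u$, which never occurs, since $v\notin K_i$ means $B[v,x_{1..n-i}]\notin\Va$ for all $x\in H$. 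As $H$ is a $(\delta/na)$-hitter for precisely this class, its failure to hit $\widetilde{B}^{(v,i,u)}$ forces $\Pr[\widetilde{B}^{(v,i,u)}(U_n)=1]\le\delta/na$; summing over the $a=|\Va|$ accept vertices gives $\Pr_y[B[v,y]\in\Va]\le\delta/n$, completing the argument. The one genuinely delicate point is this last step: one must re-present the depth-$(n-i)$ ``tail'' of $B$ starting at $v$ as an honest length-$n$ permutation branching program with a single accept state so that the hypothesis on $H$ applies, and the identity-padding used above works precisely because the identity map is a permutation, so no non-permutation branching programs are invoked anywhere.
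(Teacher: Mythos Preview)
Your proof is correct and follows essentially the same approach as the paper: one-sidedness of the error, a layer-by-layer union bound, and for each non-hit state a per-accept-vertex application of the hitter hypothesis yielding the $\delta/n$ bound. The only cosmetic differences are bookkeeping (you sum over $i=0,\dots,n-1$ while the paper sums over $i=1,\dots,n$ and handles $\vs\in\Vn$ separately) and that you spell out the identity-padding construction explicitly where the paper simply remarks that ``branching programs can ignore bits''.
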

\begin{proof}
    Let $\Vn$ be the set of states of $B$ not included in the induced hit program $B_H$. For every $v\in \Vn$ in layer $n-k$, using the fact that $H$ is a deterministic $\delta/na$-hitter for branching programs of length $n$, and hence for length $n-k\leq n$ since branching programs can ignore bits, we obtain
    \[\Pr[B[v,U_{k}]\in \Va]= \sum_{u\in \Va}\Pr[B[v,U_{k}]=u] \leq a\cdot \frac{\delta}{na}.\] 
    If $\vs\in \Vn$ then $B_H$ is the all zeroes program and the above implies $\Pr[B(U_n)=1]\leq \delta$ so we are done. Thus assume that $\vs\notin \Vn$.
    For arbitrary $x\in \zo^n$ such that $B(x)\neq B_H(x)$, it must be the case that $B(x)=1$ while $B_H(x)=0$, i.e. $B$ passes through some element of $\Vn$ in its computation on $x$ and $B[\vs,x]\in \Va$. Therefore,
    \begin{align*}
        |\Pr[B_H(U_n)=1]-\Pr[B(U_n)=1]| &\leq \Pr_{x\la U_n}[B(x)\neq B_H(x)]\\
        &=\Pr_{x\la U_n}\left[ \left(B[\vs,x]\in \Va \right)\bigwedge \left(\bigvee_{i=1}^n B[\vs,x_{1..i}]\in \Vn\right)\right]\\ 
        &\leq \sum_{i=1}^n \sum_{v\in V_i\cap \Vn}\Pr_{x\la U_{n}}\left[(B[\vs,x_{1..i}]=v) \wedge (B[v,x_{i+1..n}]\in \Va)\right]\\
        &= \sum_{i=1}^n \sum_{v\in \Vn\cap V_i}\Pr[B[\vs,U_i]=v]\cdot\Pr[ B[v,U_{n-i}]\in \Va]\\
        &\leq \sum_{i=1}^n \left(\sum_{v\in \Vn\cap V_i}\Pr[B[\vs,U_i]=v]\right)\cdot\frac{\delta}{n}\\
        &\leq \sum_{i=1}^n 1\cdot \frac{\delta}{n} = \delta.
    \end{align*}
\end{proof}

Finally, we show that given $H$, we can evaluate an arbitrary input on the induced hit program.
\begin{lemma}\label{lem:hiteval}
    Given a permutation branching program $B$ of length $n$ and a set $H\subseteq \zo^n$, for every $x\in \zo^n$ we have
    \[B_H(x) = \bigwedge_{i=0}^n\left(\bigvee_{y\in H}B(x_{1..i}||y_{i+1..n})\right)\]
    where $||$ denotes string concatenation and $x_{1..0}$ and $y_{n+1..n}$ are the empty string.
\end{lemma}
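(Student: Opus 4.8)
The plan is to show that \emph{both} sides of the claimed identity equal the indicator of one clean event about the \emph{original} program $B$. Fix $B$, $H \subseteq \zo^n$, and $x \in \zo^n$, and let $v_0 = \vs, v_1, \ldots, v_n$ be the states visited by $B$ run on $x$, i.e. $v_i = B[\vs, x_{1..i}]$, so that $B(x) = 1$ iff $v_n \in \Va$. Recall that $K_n = \Va$ (since $B[v,\cdot]$ on the empty string is the identity), that $\Va = K_n$ is also the accept set of $B_H$, and that the start state of $B_H$ is $\vs$ when $\vs \in K_0$ while $B_H$ is the all-zeroes program otherwise. I claim that both the left-hand side $B_H(x)$ and the right-hand side equal $1$ if and only if $v_i \in K_i$ for \emph{every} $i \in \{0,1,\ldots,n\}$; the lemma is then immediate.

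For the right-hand side I would argue term by term. Since $B[\vs, x_{1..i}] = v_i$, running $B$ on the hybrid string $x_{1..i}||y_{i+1..n}$ leaves it in state $B[v_i, y_{i+1..n}]$, so $B(x_{1..i}||y_{i+1..n}) = 1$ iff $B[v_i, y_{i+1..n}] \in \Va$. Hence the $i$-th disjunction $\bigvee_{y \in H} B(x_{1..i}||y_{i+1..n})$ equals $1$ iff there is some $y \in H$ with $B[v_i, y_{i+1..n}] \in \Va$, which is exactly the defining condition for $v_i \in K_i$. (For $i = 0$ this reads $\bigvee_{y \in H} B(y)$, i.e. ``$\vs \in K_0$'', and for $i = n$ it is just $B(x)$, i.e. ``$v_n \in \Va = K_n$''.) Taking the conjunction over $i = 0, \ldots, n$ shows the right-hand side is $1$ iff $v_i \in K_i$ for all $i$.

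For the left-hand side I would trace the computation of $B_H$ on $x$ and show, by induction on $i$, that it follows $v_0, v_1, \ldots$ exactly as long as those states all lie in the hit sets, and is thereafter trapped among the padding states $\{0,\ldots,n\}\times[K]$. If $v_0 \in K_0, \ldots, v_i \in K_i$ then $B_H[\vs, x_{1..i}] = v_i$, because for a state $v \in K_{i-1}$ the construction sets $(B_H)_{i-1}(v,b) = B_{i-1}(v,b)$ whenever $B_{i-1}(v,b) \in K_i$. Conversely, at the first index $j$ with $v_j \notin K_j$ — or immediately if $\vs \notin K_0$, in which case $B_H$ rejects everything — the program $B_H$ moves into $\{0,\ldots,n\}\times[K]$, and since every transition out of a padding state returns to a padding state it remains there through layer $n$; as $\Va = K_n$ is disjoint from the padding states, $B_H(x) = 0$. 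Thus $B_H(x) = 1$ iff $v_i \in K_i$ for all $i$, matching the right-hand side. The one place requiring care, and the main obstacle, is this last direction: one must check against the definition of $B_H$ that the ``greedy'' completion of the layer-$i$ transition functions on padding states never routes a padding state into $K_{i+1}$ (so the absorption is permanent), and separately dispatch the degenerate case $\vs \notin K_0$; both are routine consequences of the construction, but they are exactly what prevents the left-hand side from being larger than the clean event identified above.
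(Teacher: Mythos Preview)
Your approach is essentially the paper's: both arguments reduce to showing that each side equals the indicator of the event ``$v_i\in K_i$ for every $i$,'' with the paper packaging this as the two implications RHS${}=1\Rightarrow B_H(x)=1$ and RHS${}=0\Rightarrow B_H(x)=0$. Your treatment of the RHS (identifying the $i$th disjunction with the condition $v_i\in K_i$) and of the forward direction for $B_H$ is exactly what the paper does.

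One correction to your absorption argument. The claim that the greedy completion ``never routes a padding state into $K_{i+1}$'' is not true in general: for a fixed bit $b$, the non-greedy rules already cover all of $K_{i+1}$ only when every $v\in K_i$ has $B_i(v,b)\in K_{i+1}$; otherwise some vertices of $K_{i+1}$ are left uncovered and the greedy map on $\{(i,v):v\in[K]\}$ \emph{must} hit them to keep $(B_H)_i(\cdot,b)$ a bijection. Fortunately this is irrelevant for the lemma, because the greedy rule is never exercised along the computation from $\vs$. When the path first leaves the hit sets via $(B_H)_i$, the ``otherwise'' clause sends the current state in $K_i$ to a padding state with first coordinate $i$; every subsequent transition $(B_H)_\ell$ with $\ell>i$ then sees a padding state whose first coordinate $i\neq\ell$, so the explicit identity rule $(B_H)_\ell((i,v),b)=(i,v)$ applies and the state is frozen. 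Thus absorption is permanent not because of any property of the greedy completion, but because the particular padding state entered is thereafter governed only by the identity rule; this is also what the paper is invoking when it says the state ``is always subsequently wired to itself.'' With that fix your proof goes through.
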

\begin{proof}
    Fix arbitrary $x\in \zo^n$ and let $v_i=B[\vs,x_{1..i}]$ for all $i\in\{0,\ldots,n\}$.
    First suppose the RHS evaluates to $1$. For $i\in\{0,\ldots,n\}$ we have $1=\bigvee_{y\in H}B(x_{1..i}||y_{i+1..n})$, so there is some $y\in H$ such that $B[v_i,y_{i+1..n}]\in \Va$, which is precisely the condition for including $v_i$ in the induced hit program $B_H$, and this holds for every $i$, so $B_H(x)=1$. 
    Now suppose the RHS evaluates to $0$. Fixing the least $i$ such that $\bigvee_{y\in H}B(x_{1..i}||y_{i+1..n})=0$, we have that $v_i$ is not included in $B_H$ and so $B[v_{i-1},x_i]=u\notin K_i$. Since $u$ is always subsequently wired to itself and marked as reject in the final layer we have $B_H(x)=0$.
\end{proof}
Note that this implies that we can evaluate $B_H(x)$ given $x$ and oracle access to $B$, and this procedure is explicit if $H$ is.

\section{Putting it All Together}\label{sec:main}
We can now go from samplers for bounded-width permutation branching programs to samplers for unbounded-width permutation branching programs. We follow the outline in the proof sketch in Section~\ref{subsec:org}. First, we use an optimal sampler for the bounded-width case to generate an optimal deterministic hitter for the bounded-width case, which implies an optimal deterministic hitter $H$ for the unbounded-width case. Then we use a second sampler and evaluate it on the induced hit program $B_H$, and by choosing the parameters for the second sampler appropriately obtain an accurate estimate of $\Pr[B_H(U_n)=1]$ and thus $\Pr[B(U_n)=1]$.
\main*
\begin{proof}
    By assumption, we have a deterministic $\ep$-hitter for permutation branching programs of length $n$ and width $w=O(n^3a/\ep)$, where this $w$ is that obtained from Proposition~\ref{hpv:hsg} with $n=n$ and $\delta=\ep/(8na)$. Applying Proposition~\ref{prop:inducedhsg}, we obtain a deterministic $\ep/(4na)$-hitter $H\subseteq \zo^n$ for permutation BPs of length $n$ and width $w=O(n^3a/\ep)$, with $|H|=(na/\ep)^{O(1)}=\exp(O(\log(na/\ep)))$. Applying Proposition~\ref{hpv:hsg}, we have that $H$ is a deterministic $\ep/2na$-hitter for permutation branching programs of unbounded width with a single accept state.
    
    Now let $\A=\samp_{n,w',\ep/2}$ be an $\ep/2$-sampler for permutation branching programs of length $n$ and width $w'=|H|\cdot (n+2)\cdot a=\poly(na/\ep)$. By assumption, $\A$ has query complexity $\poly(na/\ep)$.
    
    Finally, given an arbitrary permutation branching program $B$ with at most $a$ accept vertices, define $\samp^{'B}_{n,a,\ep}()=\A^{B_H}$(), where whenever $\A$ queries the value of $B_H$ on $x\in \zo^n$, we apply Lemma~\ref{lem:hiteval}, so $\samp^{'B}_{n,a,\ep}$ queries
    \[\{(x_{1..i}||y_{1..n-i}):i\in \{0,\ldots,n\}, y\in H\}.\]
    Thus $\samp'_{n,a,\ep}$ has query complexity that equals the query complexity of $\A$ times $(n+1)\cdot |H|$, for a total query complexity of $\poly(na/\ep)$. If $\fS=\{\samp_{n,w,\ep}\}$ is non-adaptive (i.e. the queries made by $\A$ do not depend on $B_H$ and thus $B$) then $\samp'_{n,a,\ep}$ is. Finally, if $\fS=\{\samp_{n,w,\ep}\}$ is explicit then $\samp'_{n,a,\ep}$ is by definition.\\
    
    Finally, we prove $\samp_{n,a,\ep}'$ is an $\ep$-sampler. We have that $B_H$ is a permutation branching program of length $n$ and width at most $|H|\cdot (n+2)\cdot a$ by Lemma~\ref{lem:widthbound}, so by our choice of parameters $\A$ is an $\ep/2$-sampler for $B_H$, i.e. 
    $|\A^{B_H}() - \Pr[B_H(U_n)=1]| \leq \ep/2$.
    Then we conclude by the triangle inequality:
    \begin{align*}
        \MoveEqLeft{|\samp^{'B}_{n,a,\ep}() - \Pr[B(U_n)=1]|}\\
        &= |\A^{B_H}()-\Pr[B(U_n)=1]| && \text{(Lemma~\ref{lem:hiteval})}\\
        &\leq |\A^{B_H}()-\Pr[B_H(U_n)=1]| +|\Pr[B_H(U_n)=1] - \Pr[B(U_n)=1]|\\
        &\leq |\A^{B_H}()-\Pr[B_H(U_n)=1]| + \ep/2 && \text{(Lemma~\ref{lem:inducedLA})}\\
        &\leq \ep
    \end{align*}
    and since $B$ was arbitrary we obtain the result.
\end{proof}

\bibliographystyle{alpha}
\bibliography{references}
\appendix

\section{The Optimal Seed Length For Deterministic Hitters}\label{app:proofs}
We prove the optimal seed length for deterministic hitters for unbounded-width permutation branching programs. The proof is identical to that given in Hoza et al.~\cite{HozaPyVa21} except we explicitly consider the number of accept vertices in the final layer.
\HSGlb*
\begin{proof}
We prove the bounds on $n$ and $\ep/a$ separately:
\begin{enumerate}
    \item If $|H| \leq n-1$, there is some nonzero vector $z \in \mathbb{F}_2^n$ such that for every $x$,
    $$
        \bigoplus_{i = 1}^n z_i \cdot H(x)_i = 0.
    $$
    The function $B(x) = \mathbb{I}[\oplus_{i = 1}^n z_i \cdot x_i = 1]$ can be computed by a permutation branching program (of width $2$) with a single accept state, and $\Pr[B(U_n) = 1] \geq 1/2$, but $B(x)=0$ for all $x\in H$, a contradiction.
    
    \item If $|H| \leq a/4\ep$, there are at most $a/4\ep$ distinct length $l=\lceil \log(\ep/a)\rceil-1$ prefixes of elements of $H$. Letting $B$ be a permutation branching program that reaches the $i$th state in the final layer on input $x$ if
    \[\sum_{j=1}^{l}2^{j-1}\cdot x_j=i,\]
    we have that the final state of $B(U_n)$ is distributed uniformly over $\{0,\ldots,2^l-1\}$. Then choosing $a$ distinct states in the final layer that are not reached by every output of $H$ (where we use that $2^l-a/4\ep\geq a$) and marking them as accept, we obtain that $\Pr[B(U_n)=1] \geq a(2\ep/a)=2\ep$ whereas $B(x)=0$ for all $x\in H$, a contradiction.
\end{enumerate}
\end{proof}

We also note the optimal seed length for hitters for general ordered branching programs (which include permutation branching programs of bounded width). Together with the result converting deterministic samplers into hitters (Proposition~\ref{prop:inducedhsg}), this establishes the query complexity of two-sided derandomization for the model is $(nw/\ep)^{\Omega(1)}$.
\begin{claim}
Given $n,w\in \N$ and $1/2>\ep\geq 2^{-n}$, let $H\subseteq \zo^n$ be a deterministic $\ep$-hitter for ordered branching programs of width $w$ and length $n$. Then $|H|=(nw/\ep)^{\Omega(1)}$.
\end{claim}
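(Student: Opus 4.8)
The plan is to establish the bound by proving three separate lower bounds, $|H|=\Omega(n)$, $|H|=\Omega(w)$, and $|H|=\Omega(1/\ep)$, and then combining them: since $|H|$ is at least each of $c_1n$, $c_2w$, $c_3/\ep$ for absolute constants $c_i>0$, we get $|H|^3\ge c_1c_2c_3\cdot nw/\ep$, hence $|H|\ge(nw/\ep)^{\Omega(1)}$. I assume the implicit nondegeneracy hypotheses $2\le w\le 2^n$ (a width-$1$ program is constant, and for length-$n$ programs width beyond $2^n$ is no stronger than width $2^n$). Each of the three bounds comes from exhibiting an ordered branching program, or a family of them, of width at most $w$ and length $n$ with acceptance probability strictly above $\ep$ on which $H$ is identically $0$; this parallels the two-case argument behind Claim~\ref{clm:HSGlb}, but since the hard instances must now fit in width $w$ I will use several narrow programs rather than one wide one.

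The $n$-dependence is Case~1 of Claim~\ref{clm:HSGlb} unchanged: if $|H|\le n-1$, the elements of $H$ span a proper $\mathbb{F}_2$-subspace of $\zo^n$, so some nonzero $z$ is orthogonal to all of them, and the width-$2$ parity program $x\mapsto\mathbb{I}[\langle z,x\rangle=1]$ has acceptance probability $1/2>\ep$ while vanishing on $H$, a contradiction; hence $|H|\ge n$. For the $1/\ep$-dependence, set $k=\lceil\log_2(1/\ep)\rceil-1$, so $\ep<2^{-k}\le 2\ep$ and, using $\ep\ge 2^{-n}$, $k\le n$. For each $s\in\zo^k$ let $B_s$ be the width-$2$ ordered branching program (states ``alive''/``dead'') that stays alive while the input prefix agrees with $s$ and dies permanently at the first disagreement, accepting iff still alive after $k$ steps. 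Then $\Pr[B_s(U_n)=1]=2^{-k}>\ep$, so $H$ must contain an input with prefix $s$, and distinct $s$ force distinct members of $H$, giving $|H|\ge 2^k\ge 1/(2\ep)$.

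For the $w$-dependence, set $m=\lfloor\log_2 w\rfloor$, so $w/2\le 2^m\le w$ and $m\le n$, and consider programs that route the first $m$ bits through a complete binary tree into $2^m$ states, freeze that state for the remaining $n-m$ steps, and accept on a chosen leaf set $S\subseteq\zo^m$; such a program has width $2^m\le w$ and accepts exactly the inputs whose length-$m$ prefix lies in $S$, with probability $|S|/2^m$. If $|H|\le w/8$, the set of length-$m$ prefixes occurring in $H$ has size at most $2^m/4$, so (using $\ep<1/2$) one can choose $S$ disjoint from it with $|S|=\lfloor\ep 2^m\rfloor+1$; the corresponding program then accepts with probability $>\ep$ but is missed by $H$, a contradiction, so $|H|=\Omega(w)$ (the cases $w\le 3$ being trivial).

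I do not anticipate a genuine obstacle — the content is entirely in the parameter bookkeeping: verifying that each construction's acceptance probability is \emph{strictly} above $\ep$ and that it respects the width bound $w$ and length bound $n$, which is exactly where the hypotheses $\ep\ge 2^{-n}$ (so $k\le n$) and $w\le 2^n$ (so $m\le n$) get used. The one mildly non-obvious point is in the $w$-dependence: forbidding a single prefix, as in the single-wide-program argument of Claim~\ref{clm:HSGlb}, yields acceptance probability only $2^{-m}\approx 1/w$, which is too small once $\ep$ is comparable to or larger than $1/w$, so one must instead forbid a whole $\approx\ep$-fraction of the $2^m$ leaves; taking $|S|=\lfloor\ep 2^m\rfloor+1$ handles every regime of $\ep$ relative to $1/w$ uniformly.
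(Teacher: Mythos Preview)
Your proof is correct and follows essentially the same three-case structure as the paper's (parity for the $n$-bound, a missing-prefix program for the $1/\ep$-bound, and a prefix-tree program for the $w$-bound). The only substantive difference is in the $w$-bound: the paper simply marks $\Theta(2^m)$ unhit leaves as accept states to get acceptance probability at least $1/2>\ep$, so your ``mildly non-obvious'' tuning $|S|=\lfloor\ep 2^m\rfloor+1$ is unnecessary --- taking all (or half of) the unhit leaves already works uniformly in $\ep$, and the concern you raise about $2^{-m}\approx 1/w$ being too small only applies to accepting on a \emph{single} leaf, not on $\Theta(2^m)$ of them.
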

\begin{proof}
We prove the bounds on $w$ and $1/\ep$ separately. 
The bound $|H|=n^{\Omega(1)}$ is identical to the same bound in the prior proof.
\begin{enumerate}
    \item If $|H|\leq 1/2\ep$, there is some string $\sigma$ of length $l=\lceil \log(1/\ep)\rceil-1$ such that $x_{1..l}\neq \sigma$ for every $x\in H$. Then let $B$ be a width-$2$ ordered branching program where
    \[B(x)=1 \iff x_{1..l}=\sigma\]
    (note that we use the program is non-regular to keep the width at $2$), we have $\Pr[B(U_n)=1]\geq 2\ep$ but $B(x)=0$ for all $x\in H$, a contradiction.
    
    \item If $|H| \leq w/2$, there are at most $w/2$ distinct length $l=\lceil \log(w)\rceil$ prefixes of elements of $H$. Letting $B$ be a branching program that reaches the $i$th state in the final layer on input $x$ if
    \[\sum_{j=1}^{l}2^{j-1}\cdot x_j=i,\]
    we have that the final state of $B(U_n)$ is distributed uniformly over $\{0,\ldots,2^l-1\}$. Then choosing $w/2$ distinct states in the final layer that are not reached by every $x\in H$ and marking them as accept states, we obtain that $\Pr[B(U_n)=1] \geq 1/2$ whereas $B(x)=0$ for all $x\in H$, a contradiction.
\end{enumerate}
\end{proof}

We recall the existence of non-explicit deterministic averaging samplers for ordered branching programs of bounded width. Since ordered branching programs are a superset of permutation branching programs, this likewise implies the existence of optimal non-explicit averaging samplers for bounded-width permutation branching programs. Together with Theorem~\ref{thm:main} this implies Theorem~\ref{thm:mainNonExplicit}.
\begin{proposition}
    There is a family $\fS=\{\samp_{n,w,\ep}\}$ of (non-explicit) deterministic $\ep$-samplers $\samp_{n,w,\ep}$ for ordered branching programs of length $n$ and width $w$ such that $\samp_{n,w,\ep}$ has query complexity $\poly(nw/\ep)$.
\end{proposition}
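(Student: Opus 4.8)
The plan is to use the probabilistic method: I will show that a single fixed multiset $Q$ of $m=\poly(nw/\ep)$ uniformly random points of $\zo^n$ simultaneously serves as the query set of a deterministic averaging $\ep$-sampler for \emph{every} ordered branching program of length $n$ and width $w$. Since a deterministic averaging sampler is in particular a nonadaptive deterministic sampler, and since permutation branching programs of width $w$ are a subclass of ordered branching programs of width $w$, this is all that is needed. The resulting sampler is non-explicit precisely because we only establish that a good $Q$ exists, not how to find one.

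First I would fix $n$, $w$, $\ep$ and let $x_1,\dots,x_m$ be i.i.d.\ uniform elements of $\zo^n$, with $m$ to be chosen. For a single fixed ordered branching program $B$ of length $n$ and width $w$, the values $B(x_1),\dots,B(x_m)$ are i.i.d.\ Bernoulli random variables with mean $p=\Pr[B(U_n)=1]$, so a Hoeffding/Chernoff bound gives $\Pr\!\left[\,\bigl|\tfrac1m\sum_{j=1}^m B(x_j)-p\bigr|>\ep\,\right]\le 2e^{-2m\ep^2}$. Next I would bound the number of distinct length-$n$ width-$w$ ordered branching programs: each of the $n$ transition functions $B_r:V_{r-1}\times\zo\to V_r$ is one of at most $w^{2w}$ maps, the start state is one of $w$ choices, and the accept set is one of $2^w$ subsets of $V_n$, so there are at most $N\le w^{2wn}\cdot w\cdot 2^w=2^{O(wn\log w)}$ of them. (One can instead count only the finitely many functions computed by such programs, which only shrinks $N$.)

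A union bound over these $N$ programs shows that if $2Ne^{-2m\ep^2}<1$, i.e.\ if $m=\Theta(wn\log w/\ep^2)$, then with positive probability the single multiset $Q=\{x_1,\dots,x_m\}$ satisfies $\bigl|\tfrac1m\sum_{x\in Q}B(x)-\Pr[B(U_n)=1]\bigr|\le\ep$ for \emph{every} length-$n$ width-$w$ ordered branching program $B$ at once. Fixing such a $Q$, I define $\samp_{n,w,\ep}$ to query exactly the points of $Q$ and output their average; its query complexity (number of distinct queries) is at most $m=\poly(nw/\ep)$ since $\log w\le w$, and by construction it is a deterministic averaging $\ep$-sampler for the class.

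There is no real obstacle here; the argument is entirely routine. The only points requiring a little care are (i) getting the counting bound $N=2^{O(wn\log w)}$ and checking that the resulting $m=\Theta(wn\log w/\ep^2)$ is genuinely $\poly(nw/\ep)$, and (ii) noting that $Q$ should be taken as a multiset of i.i.d.\ samples so that the bound $|Q|\le m$ on the number of distinct queries holds regardless of collisions.
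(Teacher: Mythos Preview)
Your proposal is correct and follows essentially the same approach as the paper: count the number of length-$n$ width-$w$ ordered branching programs by a $\poly(nw)$-bit description, apply a Chernoff/Hoeffding bound to a random multiset of $\poly(nw/\ep)$ points for each fixed program, and union-bound over all programs. Your use of the additive Hoeffding bound is slightly cleaner than the paper's multiplicative Chernoff with the WLOG $\mu\ge 1/2$ reduction, and your explicit count $N\le w^{2wn}\cdot w\cdot 2^w$ is a bit more detailed than the paper's ``$k=\poly(nw)$ bits'' remark, but the argument is otherwise identical.
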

\begin{proof}
First note that a branching program of length $n$ and width $w$ has a description using $k=\poly(nw)$ bits, so there are at most $2^k$ such programs. Now fix $n,w\in \N$ and $\ep>0$ and consider a random set $Q$ of size $|Q|=3k/\ep$ where $Q_i$ is a random independently chosen element of $\zo^n$ for all $i$. Fixing an arbitrary ordered branching program $B$ of length $n$ and width $w$, let $\mu=\Pr[B(U_n)=1]$ be its accept probability and WLOG assume $\mu\geq 1/2$, since an additive estimate of the accept probability implies an equivalent estimate of the reject probability. Then let $Y_i=B(Q_i)$ be the random variable that is $1$ when $B$ accepts on $Q_i$. We have $Y_i\in [0,1]$ and they are independent for all $i$. Applying a Chernoff bound over the randomness of the strings in $Q$, we obtain for all $\delta\in(0,1)$
\[\Pr\left[\left|\frac{1}{|Q|}\sum_i Y_i - \mu\right| \geq \delta\mu\right] \leq 2\exp(-|Q|^2\delta^2\mu/3).\]
Then choosing $\delta=\ep$ we obtain
\[\Pr\left[\left|\frac{1}{|Q|}\sum_i Y_i - \mu\right| \geq \ep\right] \leq 2\exp(-3k^2).\]
By a union bound, the probability that a random set $Q$ of the chosen size fails to be a deterministic averaging sampler for at least one of the $2^k$ length $n$, width $w$ branching programs is at most $2\exp(k-3k^2)<1$. Thus there exists some $Q_{good}$ that is good for all such programs, so $Q_{good}$ generates a deterministic averaging $\ep$-sampler $\samp_{n,w,\ep}$ for ordered branching programs of length $n$ and width $w$. By construction, $|Q_{good}|=\poly(nw/\ep)$.
\end{proof}

\end{document}